\newcommand{\bra}[1]{\ensuremath \langle{#1}|}%
\newcommand{\ket}[1]{{\ensuremath |{#1}\rangle}}%
\newtheorem{theorem}{Theorem}
\newtheorem{lemma}[theorem]{Lemma}
\newtheorem{proposition}[theorem]{Proposition}
\crefname{inequality}{ineq.}{ineqs.}
\crefname{definition}{}{}
\definecolor{green}{rgb}{0.1,0.7,0.1}
\begin{document}

\title{Penalty models for bitstrings of constant Hamming weight}
\date{\today}

\author{Brad~Lackey}
\affiliation{Joint Center for Quantum Information and Computer Science, University of Maryland, College Park}
\affiliation{Departments of Computer Science and Mathematics, University of Maryland, College Park}
\affiliation{Mathematics Research Group, National Security Agency, Ft.~G.~G.~Meade, Maryland}

\begin{abstract}
To program a quantum annealer, one must construct objective functions whose minima encode hard constraints imposed by the underlying problem. For such ``penalty models,'' one desires the additional property that the gap in the objective value between such minima and states that fail the constraints is maximized amongst the allowable objective functions. In this short note, we prove the standard penalty model for the constraint that a bitstring has given Hamming weight is optimal with respect to objective value gap.
\end{abstract}

\maketitle


\section{Introduction and preliminaries}

When developing algorithms for constrained optimization problems, a common task is ``mapping.'' By this we mean the conversion of a hard constraint into an objective function, or ``penalty model.'' The key property is that configurations satisfying the constraint are precisely those minimizing this objective function. An important secondary property is that the penalty model be optimal, in that the minimum penalty assigned to any non-satisfying configuration is as large as possible, within the class of objective functions under consideration. This allows the largest dynamic range for further optimization tasks among the satisfying configurations. As opposed to our previous work \cite{bian2014discrete,bian2016mapping}, we focus here on quadratic unconstrained binary objective functions (QUBOs). A general QUBO on $n$ bits has the form
\begin{equation}\label[definition]{eqn:qubo}
Q(x_1, \dots, x_n) = a + \sum_{j=1}^n b_jx_j + \sum_{1\leq j<k\leq n} c_{jk}x_jx_k,
\end{equation}
where $a,b_j,c_{jk} \in \mathds{R}$. 

A frequent task when mapping planning/scheduling problems is penalizing multiple actions. In these type of problems it is customary to use a binary variable $x_{i,a}$ to represent whether task $i$ has been assigned to agent $a$, \cite{rieffel2015case,venturelli2015quantum}; one has a constraint that for each $i$, we have $x_{i,a} = 1$ for precisely one $a$. This sort of constraint is common to coloring/covering problems \cite[\S{6}]{lucas2013ising}, and also can be found in problems where incidence constraints must be enforced \cite[\S\S{7-9}]{lucas2013ising}. A generalization of this can be found in graph partition and clique finding problems where precisely $r$ bits must be set to one, \cite{childs2000finding,lackey2016partitioning} and \cite[\S{2}]{lucas2013ising}.

To illustrate, consider the task of developing a penalty model for the constraint that a bitstring have Hamming weight one, as in the previously mentioned planning/scheduling problems. The ``standard'' objective function is the QUBO
$$Q_1(\vec{x}) = E\cdot\left(1 - \sum_{j=1}^n x_j\right)^2 = E - E\sum_j x_j + 2E\sum_{j < k} x_jx_k,$$
where $E$ is an appropriate energy scale. Indeed, this has the desired key property: if $|\vec{x}| = 1$ then $Q_1(\vec{x}) = 0$, while if $|\vec{x}| \not= 1$ then $Q_1(\vec{x}) \geq E$. Turning to the second property, we see that the minimum energy penalty for a non-satisfying bitstring is $E$; we question if this is optimal. \textit{\'A priori} $E$ is arbitrary, so in order to discuss optimality we need to restrict to QUBOs with bounds on their coefficients. This is a somewhat subtle point, which we will return to later when we examine converting QUBOs to Ising Hamiltonians. For now let us suppose bounds on the coefficients of \cref{eqn:qubo} as $|b_j| \leq B$ and $|c_{jk}|\leq C$. Note that since only energy differences are well defined, bounds on $a$ are not realistic. So we have two cases:
\begin{enumerate}
\item if $\frac{B}{C} \leq \frac{1}{2}$ then the minimum penalty is $B$,
\item if $\frac{B}{C} \geq \frac{1}{2}$ then the minimum penalty is $\frac{C}{2}$.
\end{enumerate}

Similarly, a penalty model isolating bitstring of Hamming weight $r$ is
$$Q_r(\vec{x}) = E\cdot\left(r - \sum_{j=1}^n x_j\right)^2 = r^2E - (2r-1)E\sum_j x_j + 2E\sum_{j < k} x_jx_k.$$
Using the bounds as above, we again have two cases:
\begin{enumerate}
\item if $\frac{B}{C} \leq \frac{2r-1}{2}$ then the minimum penalty is $\frac{B}{2r-1}$,
\item if $\frac{B}{C} \geq \frac{2r-1}{2}$ then the minimum penalty is $\frac{C}{2}$.
\end{enumerate}
While \textit{ad hoc}, it turns out these QUBOs are optimal, as we will prove next.

\section{Optimality}\label{sec:optimality}

Again we separate out the case $r = 1$ since is it somewhat different. Let us write $\vec{\delta}_j$ for the bitstring that is $1$ at position $j$, but zero elsewhere. Let us write $g$ for the largest minimum penalty achievable by a QUBO on bitstrings of Hamming weight not equal $1$. Continuing the notation of \cref{eqn:qubo} we evaluate
$$\begin{array}{rcccl}
g &\leq& Q(\vec{0}) &=& a\\
0 &=& Q(\vec{\delta}_j) &=& a + b_j\\
g &\leq& Q(\vec{\delta}_j+\vec{\delta_k}) &=& a + b_j + b_k + c_{jk}.
\end{array}$$
Subtracting the middle equality from the top inequality gives
$$g \leq -b_j \leq B.$$
Now if $\frac{B}{C} \leq \frac{1}{2}$ (Case 1 above) then $Q_1$ already saturates this bound, and hence is optimal in this case.

On the other hand, if $\frac{B}{C} \leq \frac{1}{2}$ (Case 2 above), then we exploit the inequality
$$g \leq -a + (a + b_j) + (a + b_k) + c_{jk} \leq -g + c_{jk} \leq -g + C.$$
That is $2g \leq C$ and again $Q_1$ saturates this bound. Therefore $Q_1$ is optimal in every case.

For the general case of weight $r$ bitstrings, we again focus on the value of $Q$ on bitstrings of weight $r-1$, $r$, and $r+1$. Let $S \subset \{1, \dots, n\}$ be a subset of cardinality $|S| = r-1$, $S'$ a set with $|S'| = r$, and $S''$ with $|S''| = r+1$. Then evaluating $Q$ the bitstrings with these support sets produces
\begin{align}
g &\leq a + \sum_{j \in S} b_j + \sum_{\substack{j<k\\j,k\in S}} c_{jk}\label[inequality]{eqn:lower}\\
0 &=  a + \sum_{j \in S'} b_j + \sum_{\substack{j<k\\j,k\in S'}} c_{jk}\label{eqn:middle}\\
g &\leq a + \sum_{j \in S''} b_j + \sum_{\substack{j<k\\j,k\in S''}} c_{jk}.\label[inequality]{eqn:upper}
\end{align}
First, take any $S$ and $u \not\in S$ and form $S' = S\cup\{u\}$. Then using $S'$ in \cref{eqn:middle}, we subtract this from \cref{eqn:lower}, which gives us
\begin{equation}\label[inequality]{eqn:lower_bound}
g \leq -b_u - \sum_{j\in S} c_{ju},
\end{equation}
Next, take any $S'$ and $u\not\in S'$ and form $S'' = S'\cup\{u\}$. Using $S''$ in \cref{eqn:upper}, we subtract from this \cref{eqn:middle}, yielding
\begin{equation}\label[inequality]{eqn:upper_bound}
g \leq b_u + \sum_{j\in S'} c_{ju}.
\end{equation}

To obtain one bound, let us add \cref{eqn:lower,eqn:upper} with $S' = S \cup \{v\}$ (with $u\not=v$). This gives $2g \leq c_{uv} \leq C$, and so if $\frac{B}{C} \geq \frac{2r-1}{2}$ (Case 2), then $Q_r$ already has minimum penalty $\frac{C}{2}$ and is therefore optimal. 

When $\frac{B}{C} \geq \frac{2r-1}{2}$ (Case 1), we obtain a different bound by taking $u$ and $S'$ as in \cref{eqn:upper}. Now for the same $u$, we sum \cref{eqn:lower} over all $S \subset S'$ with $|S| = r-1$. Each $j \in S'$ appears in $r-1$ subsets $S \subset S'$, namely only when $\{j\} = S'\setminus S$ is it absent. Hence each $c_{ju}$ appears in $(r-1)$ inequalities, and we obtain
$$rg \leq -rb_u - (r-1)\sum_{j\in S'} c_{ju}.$$
Adding to this inequality $(r-1)$ times \cref{eqn:upper}, one gets
$$(2r-1)g \leq -b_u \leq B.$$
Again $Q_r$ saturates this inequality and so is optimal in this case as well. 

Note that only the lower bound on $b_j$ and upper bound on $c_{jk}$ were relevant to this argument. Hence we have proven the following result.

\begin{theorem}\label{thm:optimalQUBO}
Among all QUBOs whose linear coefficients are lower bounded by $-B$ and quadratic coefficients upper bounded by $C$, the QUBO $Q_r$ realizes the optimal minimal penalty model for Hamming weight $r$ bitstrings. The optimal minimal penalty is $\frac{B}{2r-1}$ when $B \leq \frac{2r-1}{2}\cdot C$ or $\frac{C}{2}$ when $B \geq \frac{2r-1}{2}\cdot C$.
\end{theorem}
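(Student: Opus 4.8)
The plan is to bound the achievable gap $g$ of an arbitrary \emph{admissible} QUBO --- one whose linear coefficients satisfy $b_j \geq -B$ and quadratic coefficients satisfy $c_{jk} \leq C$ --- by $\min\{\tfrac{B}{2r-1},\tfrac{C}{2}\}$, and then to check that $Q_r$ with the largest admissible energy scale saturates this. First I would normalize: a penalty model for Hamming weight $r$ must have \emph{all} weight-$r$ bitstrings among its minimizers with one common value, so after shifting the (unconstrained) constant $a$ we may assume $Q$ vanishes on every weight-$r$ string and $Q(\vec{x}) \geq g$ whenever $|\vec{x}| \neq r$. Evaluating $Q$ on bitstrings whose supports are sets of sizes $r-1$, $r$, $r+1$ then yields the (in)equalities \cref{eqn:lower,eqn:middle,eqn:upper}; here I assume $1 \leq r \leq n-1$ so that weight-$(r\pm1)$ strings exist and are non-satisfying.

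Next I would distill two families of ``edge'' inequalities. Writing a weight-$r$ support as $S' = S \cup \{u\}$ with $|S| = r-1$ and subtracting \cref{eqn:middle} from \cref{eqn:lower} gives \cref{eqn:lower_bound}, $g \leq -b_u - \sum_{j\in S}c_{ju}$; symmetrically, writing a weight-$(r+1)$ support as $S'' = S'\cup\{u\}$ with $|S'| = r$ and subtracting \cref{eqn:middle} from \cref{eqn:upper} gives \cref{eqn:upper_bound}, $g \leq b_u + \sum_{j\in S'}c_{ju}$. These are the building blocks for both bounds.

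For $g \leq \tfrac{C}{2}$ I would exploit a ``square'' of bitstrings: fix a weight-$(r-1)$ support $T$ and distinct indices $u, v \notin T$, and observe that for \emph{any} QUBO the alternating sum $Q(T) + Q(T\cup\{u,v\}) - Q(T\cup\{u\}) - Q(T\cup\{v\})$ collapses to the single coefficient $c_{uv}$, because the constant and all linear contributions cancel, as do all quadratic contributions except the one indexed by $\{u,v\}$. The two outer strings have weight $r\pm1 \neq r$ and the two inner strings have weight $r$, so this sum is at least $2g$, giving $2g \leq c_{uv} \leq C$. This already settles Case 2.

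The bound $g \leq \tfrac{B}{2r-1}$ is the step I expect to be the real obstacle, as it requires finding the correct nonnegative combination of the edge inequalities. Fixing $u$ and a weight-$r$ support $S' \not\ni u$, I would sum \cref{eqn:lower_bound} over all $r$ of the $(r-1)$-element subsets $S \subset S'$; since each $j \in S'$ lies in exactly $r-1$ of them, this collapses to $rg \leq -r b_u - (r-1)\sum_{j\in S'}c_{ju}$, and adding $(r-1)$ copies of \cref{eqn:upper_bound} kills every $c_{ju}$ and leaves $(2r-1)g \leq -b_u \leq B$. This settles Case 1 (and for $r = 1$ degenerates to the bare inequality $g \leq -b_u \leq B$). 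Finally I would verify the converse: $Q_r$ has quadratic coefficients $2E$ and linear coefficients $-(2r-1)E$, hence is admissible exactly when $E \leq \tfrac{C}{2}$ and $E \leq \tfrac{B}{2r-1}$; taking $E = \min\{\tfrac{C}{2},\tfrac{B}{2r-1}\}$ makes $Q_r$ an admissible penalty model whose gap is exactly that $E$, matching the upper bound. I would also record that the argument used only the lower bounds on the $b_j$ and the upper bounds on the $c_{jk}$, which is why the hypotheses of the theorem are one-sided.
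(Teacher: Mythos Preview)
Your proof is correct and follows essentially the same route as the paper: you derive the same edge inequalities \cref{eqn:lower_bound,eqn:upper_bound}, obtain $2g \leq c_{uv}$ by combining one of each (your ``square'' identity is exactly the paper's addition of \cref{eqn:lower_bound} and \cref{eqn:upper_bound} with $S' = S\cup\{v\}$), and obtain $(2r-1)g \leq -b_u$ by the identical averaging-over-subsets trick. The only cosmetic differences are that you treat $r=1$ uniformly rather than separately and that you spell out the saturation by $Q_r$ explicitly.
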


\section{Restricting the topology}\label{sec:restriction}

As we have seen above, the QUBOs $Q_r$ are optimal in terms of maximizing the penalty for bitstrings not of the desired weight. However, they suffer from the fact that their interaction graph (the graph that has an edge $(j,k)$ for each nonzero $c_{jk}$) is complete. One may be willing to accept suboptimal QUBOs for a sparser graph. We see that this is not possible.

Again we separate out the case $r=1$ as this is special. Suppose $(j,k)$ is not an edge of the interaction graph of a QUBO $Q$, and so $c_{jk}=0$. Then as before,
$$\begin{array}{rclcl}
g &\leq& Q(\vec{0}) &=& a,\\
0 &=& Q(\vec{\delta}_j) &=& a + b_j.
\end{array}$$
Now evaluating $Q(\vec{\delta}_j + \vec{\delta_k})$ we obtain
$$g \leq a + b_j + b_k = -a \leq -g.$$
But $g \geq 0$ and so $g = 0$. The general case of $r > 1$ is similar, which for formally state now.

\begin{proposition}
Let $Q$ be any QUBO whose interaction graph is not complete, and has $Q(\vec{x}) = 0$ for all Hamming weight $r$ bitstrings. Then there exists a bitstring $\vec{x}$ of Hamming weight $|\vec{x}| \not= r$ so that $Q(\vec{x}) = 0$.
\end{proposition}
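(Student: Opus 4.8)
The plan is to reuse, with only cosmetic changes, the bound $2g\leq C$ established in \cref{sec:optimality}, now with $C$ replaced by the value $0$ that a missing edge forces on the relevant quadratic coefficient. Since $Q$ is a penalty model it is nonnegative, so writing $g=\min\{Q(\vec{x}):|\vec{x}|\neq r\}$ for the minimum penalty we have $g\geq 0$, and it suffices to show $g=0$: for then, by the very definition of $g$, some bitstring of weight $\neq r$ attains $Q(\vec{x})=0$. As the interaction graph is incomplete, fix a non-edge, i.e.\ a pair $u\neq v$ with $c_{uv}=0$. We work in the regime $n\geq r+1$ (otherwise \cref{eqn:upper} is already vacuous), so we may choose a set $S\subset\{1,\dots,n\}$ with $|S|=r-1$ and $u,v\notin S$.

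Next I would apply \cref{eqn:lower_bound} to this $S$ and the index $u$, and apply \cref{eqn:upper_bound} to the weight-$r$ set $S\cup\{v\}$ and the same index $u$ (legitimate since $u\notin S\cup\{v\}$). Their right-hand sides are $-b_u-\sum_{j\in S}c_{ju}$ and $b_u+\sum_{j\in S}c_{ju}+c_{uv}$; summing the two inequalities, the term $b_u$ and the partial sum $\sum_{j\in S}c_{ju}$ cancel and only the single new interaction remains, giving $2g\leq c_{uv}=0$. Together with $g\geq 0$ this forces $g=0$. Both inequalities are then equalities, and substituting $-b_u-\sum_{j\in S}c_{ju}=0$ into \cref{eqn:middle} evaluated at $S\cup\{u\}$ exhibits the witness explicitly: $Q$ vanishes on the indicator bitstring of $S$, which has Hamming weight $r-1\neq r$. (The $r=1$ case treated above is simply $S=\emptyset$.)

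The main ``obstacle'' is that there really is none: everything collapses to the one-line cancellation in summing \cref{eqn:lower_bound} and \cref{eqn:upper_bound}, exactly as in \cref{sec:optimality}. Mirroring the remark closing \cref{sec:optimality}, the only coefficient fact used is $c_{uv}=0$ for a single missing edge — no bound on the magnitude of any coefficient enters — together with nonnegativity of $Q$ and its vanishing on all Hamming weight $r$ bitstrings. One should note that $n\geq r+1$ is a genuine hypothesis: for $n\leq r$ there need not be any weight-$(r+1)$ bitstring and a small $n=r$ example shows the conclusion can fail; but this is precisely the range in which \cref{eqn:upper} is non-vacuous, so nothing is lost.
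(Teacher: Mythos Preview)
Your proof is correct and takes essentially the same approach as the paper's: choose a missing edge $c_{uv}=0$ and a set $S$ of size $r-1$ avoiding $u,v$, then add \cref{eqn:lower_bound} for $(S,u)$ to \cref{eqn:upper_bound} for $(S\cup\{v\},u)$ to obtain $2g\leq c_{uv}=0$, whence $g=0$ by nonnegativity. Your added remarks---the explicit weight-$(r-1)$ witness $\chi_S$ and the caveat $n\geq r+1$---go slightly beyond the paper's terse argument but are sound.
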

\begin{proof}
Suppose $c_{uv} = 0$, which exists by hypothesis, and $S$ be any set of $r-1$ indices not containing $u$ or $v$. Write $g = \min\{Q(\vec{x}) \::\: |\vec{x}| = r\pm 1\}$. Taking $S_1 = S \cup \{u\}$ we apply \cref{eqn:lower},
$$g \leq -b_u - \sum_{j\in S} c_{ju}.$$
Now taking $S_2 = S\cup \{v\}$ and $S' = S_2 \cup \{u\}$ we apply \cref{eqn:upper},
$$g \leq b_u + \sum_{j\in S_2} c_{ju}.$$
Adding these gives $2g \leq c_{uv} = 0$, and so as above $g = 0$.
\end{proof}

We hasten to indicate that this theorem does not prohibit quadratic penalty models on sparse graphs, but rather states that to produce one requires the graph have more $n$ vertices. Clearly, one can apply standard minor embedding techniques \cite{choi2008minor,choi2011minor} to $Q_r$ to accomplish this.

\section{QUBOs versus Ising Hamiltonians}

At this point, we indicate that the proof given in \cref{sec:optimality} could have been simplified by first proving that any optimal QUBO must be symmetric under reordering its variables. Unfortunately this would not have been true in the context of \cref{sec:restriction} where one or more the coupling coefficients is assumed to vanish, and so we opted for a direct proof of \cref{thm:optimalQUBO}. In this section we aim to prove an analogue of \cref{thm:optimalQUBO} for Ising Hamiltonians, and so first prove this reduction to the symmetric case.

\begin{lemma}
Let $$H = E_0 + \sum_{j=1}^n h_j s_j + \sum_{j<k} J_{jk}s_js_k$$ be an Ising Hamiltonian for spins $s_j \in \{\pm 1\}$, and $G$ be a permutation group acting on the spin indexes. Suppose we have the following properties:
\begin{enumerate}
\item the ground state manifold, $M$, of $H$ has energy $0$ and is invariant under the action of $G$;
\item any bounds on the biases and interaction coefficients are invariant under $G$ (e.g. $-h_\text{min} \leq h_j \leq h_\text{max}$ and $-J_\text{min} \leq J_{jk} \leq J_\text{max}$); and,
\item the spectral gap, $\gamma(H)$, between the ground states and first excited states is maximal among all Hamiltonians with ground states $M$ and given coefficient bounds.
\end{enumerate}
Then the exists a Hamiltonian with ground state manifold $M$, which satisfies the given coefficient bounds, achieves the maximal spectral gap, and the and takes the form
\begin{equation}\label{eqn:symmetric} 
\bar{H} = E_0 + \sum_\mu \bar{h}_\mu f_\mu(\vec{s}) + \sum_\nu \bar{J}_\nu g_\nu(\vec{s}),
\end{equation}
where $\{f_\mu\}$ and $\{g_\nu\}$ form a basis of the linear and quadratic $G$-invariant polynomials.
\end{lemma}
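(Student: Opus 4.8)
The plan is a group-averaging (symmetrization) argument. For each $\sigma \in G$, regard $\sigma$ as acting on spin configurations $\vec{s} \in \{\pm 1\}^n$ by permuting coordinates, and set $H^\sigma(\vec{s}) = H(\sigma\cdot\vec{s})$. First I would check that $H^\sigma$ again lies in the admissible class. Since $\sigma$ acts bijectively on $\{\pm 1\}^n$, the Hamiltonian $H^\sigma$ has exactly the same spectrum as $H$, hence the same spectral gap $\gamma = \gamma(H)$. Since $M$ is $G$-invariant we have $\sigma\cdot M = M$; thus $H^\sigma$ vanishes precisely on $M$ and satisfies $H^\sigma(\vec{s})\geq\gamma$ for $\vec{s}\notin M$ (indeed $\sigma\cdot\vec{s}\notin M$ whenever $\vec{s}\notin M$, since $\sigma^{-1}$ preserves $M$). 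Finally, collecting terms shows $H^\sigma$ is again Ising with $E_0$ unchanged and with biases and couplings merely permuted among themselves, so hypothesis (2) guarantees it still obeys the coefficient bounds. In particular every $H^\sigma$ also attains the maximal gap.

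Next I would average:
\[
\bar{H} \;=\; \frac{1}{|G|}\sum_{\sigma\in G} H^\sigma .
\]
By construction $\bar{H}$ is $G$-invariant, and being a polynomial of degree at most two in $\vec{s}$ its coefficient vector lies in the $G$-invariant subspace of such polynomials; expanding in the basis $\{1\}\cup\{f_\mu\}\cup\{g_\nu\}$ of $G$-invariant linear and quadratic polynomials (the orbit sums of single spins and of spin pairs) puts $\bar{H}$ in the asserted form \cref{eqn:symmetric}, with constant term still $E_0$. The coefficient bounds persist because the admissible set of coefficient vectors is a product of intervals — in particular convex — while the coefficients of $\bar{H}$ form a convex combination of those of the $H^\sigma$, each of which is admissible.

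It then remains to identify the ground manifold and gap of $\bar{H}$. For $\vec{s}\in M$, each $H^\sigma(\vec{s})=0$, so $\bar{H}(\vec{s})=0$; for $\vec{s}\notin M$, each $H^\sigma(\vec{s})\geq\gamma$, so $\bar{H}(\vec{s})\geq\gamma>0$. Hence $\bar{H}$ is minimized exactly on $M$ with value $0$ and has $\gamma(\bar{H})\geq\gamma$. But $\bar{H}$ is itself a Hamiltonian with ground state manifold $M$ obeying the given bounds, so hypothesis (3) forces $\gamma(\bar{H})\leq\gamma$; therefore $\gamma(\bar{H})=\gamma$ and $\bar{H}$ has all the required properties. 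I do not expect a genuine obstacle here: the only delicate point is to confirm, at each step — the passage to $H^\sigma$ and then to the average — that we stay inside the admissible class, and in particular that non-ground configurations get permuted only to non-ground configurations, which is precisely where $G$-invariance of the \emph{set} $M$ (not merely of its energy value) is used.
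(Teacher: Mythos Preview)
Your proposal is correct and follows essentially the same symmetrization argument as the paper: define the permuted Hamiltonians, average over $G$, and then verify in turn that $\bar{H}$ has the required form, satisfies the coefficient bounds (via convexity/averaging), has ground state manifold $M$, and achieves gap at least $\gamma(H)$, with equality forced by maximality. The only cosmetic difference is that the paper phrases the action as $g\cdot H$ on the polynomial side while you phrase it as precomposition $H^\sigma(\vec{s})=H(\sigma\cdot\vec{s})$; these are equivalent.
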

\begin{proof}
Let $G$ act on $H$ in the obvious way:
$$g\cdot H = E_0 + \sum_{j=1}^n h_j s_{g(j)} + \sum_{j<k} J_{jk} s_{g(j)}s_{g(k)}.$$
Then write $\bar{H} = \frac{1}{|G|}\sum_{g\in G} g\cdot H$. Then $\bar{H}$ is precisely the projection of $H$ onto the space of $G$-invariant polynomials, and so is of the form of \cref{eqn:symmetric} (see for example \cite[Chapter 4]{olver1999classical}).

Note that from property (1), the ground states of $g\cdot H$ coincide with those of $H$, and have ground state energy zero. Consequently, we have (i) $\bra\psi\bar{H}\ket\psi \geq 0$ for any $\ket\psi$, and (ii) any ground state of $H$ is a zero-energy state of $\bar{H}$. So any state in $M$ is also a ground state of $\bar{H}$.  Conversely if $\bra\psi\bar{H}\ket\psi > 0$, the for some $g$ we must have $\bra\psi (g\cdot H)\ket\psi > 0$ and hence $\ket\psi \not\in M$. Therefore, the ground state manifold of $\bar{H}$ is also $M$.

To bound the coefficients of $\bar{H}$, we note that the coefficient of $s_j$ in $\bar{H}$ is $\frac{1}{|G|}\sum_{g\in G} h_{g^{-1}(j)}$. But by property (2), each $h_{g^{-1}(j)}$ satisfies the same bounds as $h_j$, say $a \leq h_j \leq b$. Then we have
$$a = \frac{1}{|G|}\sum_{g\in G} a \leq \frac{1}{|G|}\sum_{g\in G}h_{g^{-1}(j)} \leq \frac{1}{|G|}\sum_{g\in G} b = b.$$
The same argument shows the other coefficients of $\bar{H}$ satisfies the same bounds as those of $H$.

Finally, to show that $\bar{H}$ achieves the maximal spectral gap, we argue that the spectral gaps of each $g\cdot H$ coincide with that of $H$. But this is clear: if $\ket\psi$ is an eigenstate of $H$ with energy $E$, then $g\cdot\ket\psi$ is an eigenstate of $g\cdot H$ with the same energy. And so for any first excited state of $\bar{H}$, say $\bar{H}\ket\psi = \gamma(\bar{H})\ket\psi$, we have $\ket\psi$ is orthogonal to $M$ and thus $\bra\psi (g\cdot H) \ket\psi \geq \gamma(H)$. Thus 
$$\gamma(\bar{H}) = \bra\psi \bar{H} \ket\psi =  \frac{1}{|G|} \sum_{g\in G} \bra\psi (g\cdot H) \ket\psi \geq \gamma(H).$$ 
But $\bar{H}$ is a Hamiltonian that satisfies (1) and (2) and so from property (3), $\gamma(\bar{H}) = \gamma(H)$ as $\gamma(H)$ is maximal.
\end{proof}

\begin{theorem}
Among Ising Hamiltonians whose linear coefficients $h_j$ are bounded $-h_{min} \leq h_j \leq h_{max}$ ($h_{min},h_{max} > 0$) and quadratic coefficients $J_{jk}$ are bounded $J_{jk} \leq J_{max}$, the optimal penalty model for Hamming weight $r$ bitstrings is given by
\begin{equation}\label{eqn:optimal-Ising}
H_r(s) = E\left(\tfrac{n}{2} - \tfrac{3}{2}(n-2r)^2 + (n-2r)\sum_{j=1}^n s_j + \sum_{j<k} s_js_k\right),
\end{equation}
with minimal penalty $2E$. The optimal choice of $E$ is given as follows.
\begin{enumerate}
\item If $r = \frac{n}{2}$ then $E = J_{max}$.
\item If $r < \frac{n}{2}$ then $E$ is the lesser of $J_{max}$ and $\frac{h_{max}}{n-2r}$.
\item If $r > \frac{n}{2}$ then $E$ is the lesser of $J_{max}$ and $\frac{h_{min}}{r-2n}$.
\end{enumerate}
\end{theorem}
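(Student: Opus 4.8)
The plan is to follow the route sketched just before the statement: first use the preceding Lemma with $G=S_n$ to pass to permutation-symmetric Ising Hamiltonians, then solve the resulting low-dimensional optimization by evaluating the energy at configurations near the constraint and combining inequalities, exactly as in \cref{sec:optimality}. (Since only energy differences are meaningful, I read the additive constant in $H_r$ as whatever value makes $H_r$ vanish on the weight-$r$ configurations, i.e.\ $E_0 = \tfrac{E}{2}\bigl(n+(n-2r)^2\bigr)$; and I take $0<r<n$, the extreme cases $r=0,n$ selecting a single configuration being degenerate.) For Step 1, note that the constraint ``$|\vec x|=r$'' and the coefficient bounds $-h_{min}\le h_j\le h_{max}$, $J_{jk}\le J_{max}$ are all invariant under $G=S_n$ permuting the spin indices. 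Given any penalty Hamiltonian $H$ for this constraint with gap $g$, the group average $\bar H=\tfrac1{|G|}\sum_\pi \pi\cdot H$ is—by the argument proving the Lemma, which needs only invariance of the ground manifold and of the bounds, not maximality—again a penalty Hamiltonian for the same constraint, obeys the same bounds, and has gap $\ge g$; moreover it is $G$-invariant. The $S_n$-invariant linear polynomials are spanned by $\sum_j s_j$ and the invariant quadratics by $\sum_j s_j^2=n$ and $\sum_{j<k}s_js_k$, so $\bar H=E_0+h\sum_j s_j+J\sum_{j<k}s_js_k$ with $-h_{min}\le h\le h_{max}$ and $J\le J_{max}$. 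Hence the optimal gap over all admissible Hamiltonians equals that over this symmetric family, and—being an optimization over two bounded real parameters—it is attained.

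For Step 2, observe that for such $\bar H$ the energy depends only on $\sigma:=\sum_j s_j\in\{-n,-n+2,\dots,n\}$: using $\sum_j s_j^2=n$ we get $\bar H=\phi(\sigma)$ with $\phi(\sigma)=E_0+h\sigma+\tfrac J2(\sigma^2-n)$. The ground manifold must be exactly $\{\sigma=\sigma_0\}$, $\sigma_0:=2r-n$, at energy $0$; since a concave or affine $\phi$ would attain its minimum over this lattice at an endpoint $\pm n\neq\sigma_0$, we must have $J>0$, so $0<J\le J_{max}$. Imposing $\phi(\sigma_0)=0$ and expanding gives $\phi(\sigma_0\pm2)=2J\pm2(h+J\sigma_0)$, and since $\sigma_0\pm2$ are admissible and $\neq\sigma_0$, the gap obeys $g\le\phi(\sigma_0+2)$ and $g\le\phi(\sigma_0-2)$. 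Adding these yields $2g\le 4J\le 4J_{max}$, i.e.\ $g\le 2J_{max}$—the quadratic-limited bound, saturated by $H_r$ with $E=J_{max}$, for which $\phi(\sigma)=\tfrac E2(\sigma-\sigma_0)^2$ and the gap is $\phi(\sigma_0\pm2)=2E$.

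For Step 3, the linear-limited bound: when $r<\tfrac n2$, so $\sigma_0=-(n-2r)<0$, from $g\le\phi(\sigma_0+2)=2J+2(h+J\sigma_0)$ and $h\le h_{max}$ one gets $g\le 2h_{max}-2J(n-2r-1)$; combining this with $(n-2r-1)$ uses of $g\le 2J$ (the average of the two inequalities above) collapses to $(n-2r)\,g\le 2h_{max}$. Thus $g\le 2\min\!\bigl(J_{max},\tfrac{h_{max}}{n-2r}\bigr)$, which is exactly the gap $2E$ of $H_r$ when $E$ is the lesser of $J_{max}$ and $\tfrac{h_{max}}{n-2r}$—and its coefficients $J=E$ and $h=E(n-2r)>0$ then respect the bounds. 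The case $r>\tfrac n2$ is the mirror image: using $\phi(\sigma_0-2)\ge g$, $h\ge-h_{min}$, and $(2r-n-1)$ uses of $g\le 2J$ gives $(2r-n)\,g\le 2h_{min}$, matching $H_r$ with $E=\min\!\bigl(J_{max},\tfrac{h_{min}}{2r-n}\bigr)$. When $r=\tfrac n2$, $\sigma_0=0$ and $H_r$ has $h=0$, only $g\le 2J_{max}$ survives, and $E=J_{max}$ is optimal. In every case $H_r$ attains the derived upper bound, which is the theorem.

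The main obstacle I anticipate is twofold. First, applying the Lemma cleanly: one must check that both the ground manifold and the coefficient bounds are $G$-invariant, and either invoke existence of an optimal Hamiltonian or, as above, bypass it by symmetrizing an arbitrary competitor. Second, and more computational, is the weighting in Step 3: the single inequality $g\le\phi(\sigma_0+2)$ together with $h\le h_{max}$ only yields $g\le 2h_{max}$, and to extract the sharper factor $\tfrac1{n-2r}$ one must add in the quadratic bound $g\le 2J$ with the correct multiplicity $n-2r-1$—precisely the analogue of how the QUBO proof weights \cref{eqn:lower} against \cref{eqn:upper} to pull out the factor $2r-1$. The remaining checks—that the stated $H_r$ is $S_n$-symmetric, has the claimed ground manifold and gap $2E$, and satisfies the coefficient bounds for the prescribed $E$—are routine.
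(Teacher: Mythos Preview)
Your proof is correct and follows the paper's approach: invoke the Lemma with $G=S_n$ to reduce to symmetric Hamiltonians $E_0+h\sum_j s_j+J\sum_{j<k}s_js_k$, then bound the gap by evaluating at weights $r-1$, $r$, $r+1$ and adding the resulting inequalities to obtain $g\le 2J$. Your Step~3 is more explicit than the paper's---which dispatches the linear-limited cases with the single sentence ``the three cases ensure the bounds on the coefficients are all satisfied''---and your weighted combination (adding $n-2r-1$ copies of $g\le 2J$ to $g\le 2h_{\max}-2J(n-2r-1)$) is a clean way to fill in that detail, exactly paralleling the QUBO argument in \cref{sec:optimality}.
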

\begin{proof}
The ground state of the Hamiltonian corresponds to all bitstrings of Hamming weight $r$, which is invariant under the entire symmetric group, $G = S_n$, as are the proscribed bounds. The $S_n$-invariant polynomials are generated by the elementary symmetric polynomials (see for example \cite[Theorem 4.23]{olver1999classical}), and so by the lemma the maximal spectral gap is achieved by a Hamiltonian of the form
$$H(s) = E_0 + h\cdot p_1(\vec{s}) + J\cdot p_2(\vec{s}) = E_0 + h\cdot\sum_{j=1}^n s_j + J\cdot \sum_{j<k} s_j s_k.$$
We hasten to point to point out the space of $S_n$-invariant quadratic polynomials is two dimensional, spanned by $p_2$ and $p_1^2$. However when restricted to spins,
$$p_1^2(\vec{s}) = \left(\sum_{j=1}^n s_j\right)^2 = \sum_{j=1}^n s_j^2 + 2\sum_{j<k}s_js_k = n + 2 p_2(\vec{s}),$$
and hence we can incorporate the contributions from $p_1^2$ into other terms.

Working from this form, we evaluate $H$ on string of weight $r-1$, $r$, and $r+1$ and find the optimal gap $g$ must satisfy
\begin{align*}
    g &\leq -h - 2J(2r-1-n)\\
    g &\leq h + 2J(2r+1-n),
\end{align*}
with all other weights providing less stringent inequalities. Adding these gives $g \leq 2J$, which the Hamilton (\ref{eqn:optimal-Ising}) saturates. The three cases ensure the bounds on the coefficients are all satisfied.
\end{proof}

\bibliography{main}

\end{document}